\documentclass[sn-mathphys-num,iicol]{sn-jnl}

\usepackage{amsmath,amssymb,amsthm,bm,graphicx}
\theoremstyle{plain}
\newtheorem{theorem}{Theorem}
\newtheorem{proposition}{Proposition}
\newtheorem{corollary}{Corollary}

\theoremstyle{remark}
\newtheorem*{remark}{Remark}
\newcommand{\Mm}{\mathcal{M}}

\begin{document}

\title[Certified breathing stability regions]{Certified Breathing Stability Regions in Nonlinear Dynamical Systems: Composite Lyapunov Certificates, M-Matrix Conditions, and a Resilience--Fragility Correspondence}

\author[1]{\fnm{Mari\'an} \sur{Me\v{s}ter}}\email{marian.mester@tuke.sk}
\affil[1]{\orgdiv{Department of Electric Power Engineering}, \orgname{FEI, Technical University of Ko\v{s}ice}, \city{Ko\v{s}ice}, \country{Slovakia}}

\abstract{We develop a unified, certified lower bound on the time-to-boundary margin $\Mm$ for transient stability of interconnected dissipative systems under slow parameter drift. The companion work establishes $\Mm$ as the first-passage time of the joint state--parameter motion to the synchronism boundary and proves $\Mm=\mathrm{CCT}$ exactly on the one-machine--infinite-bus reduction, while leaving the multimachine certified margin open. Here a composite (mixed-region) Lyapunov function, formed by absorbing the restoring intra-group coupling into group energy functions and treating only the residual cross-cut coupling through the comparison principle, yields a positively invariant inner estimate of the region of attraction whenever an associated test matrix is a nonsingular M-matrix. The certified region \emph{breathes} with the drift: its size is governed by a single critical synchronising stiffness $k_c(\lambda)$, and as $k_c\to0$ at the boundary the region breathes shut and the certified margin $\Mm_{\mathrm{low}}\le\Mm_{\mathrm{true}}$ vanishes. We give a nonlinear sector form of the construction, a domain-neutral resilience--fragility reading in which the coupling that certifies order is the one whose growth certifies collapse, and a constructive control corollary establishing a sharp dichotomy between damping injection and structural action. The mechanism is demonstrated identically on the WSCC nine-bus power system and on an inertial Kuramoto network, whose normalised breathing curves collapse, to leading order, onto a single profile. We present this collapse as numerical evidence for a conjectured universal form; a normal-form proof is identified as the precise open step.}

\keywords{transient stability, vector Lyapunov function, M-matrix, comparison principle, synchronisation, saddle-node bifurcation, region of attraction, resilience}

\maketitle

\section{Introduction}\label{sec:intro}
Power systems are being pushed closer to their stability limits by the replacement of synchronous generation with converter-interfaced renewables, which erodes inertia and shrinks the corrective window left after a disturbance [8,9]. In this low-inertia regime the operationally decisive question is no longer only whether the system is secure at the present operating point, but how long it will remain so as that point drifts: the quantity of interest is a time to a boundary, not a binary verdict. A recent European disturbance illustrates the operational stakes: in the Czech grid incident of 4 July 2025 an ordinary single-line contingency cascaded to an island collapse, with the corrective actions arriving too slowly and the cascading risk present in routine $N-1$ results yet not surfaced to the operator in real time [49].

The classical index answering a fixed-point version of this question is the critical clearing time (CCT). The companion paper [1] generalises CCT to a time-to-boundary margin $\Mm$, the first time the joint state--parameter trajectory reaches the survival boundary $\Sigma=\partial A(\delta^*)$ under slow drift, and proves $\Mm=\mathrm{CCT}$ exactly on the one-machine--infinite-bus (OMIB) reduction. It also delimits the multimachine case honestly: the direct energy method is non-conservative there, the controlling unstable equilibrium being the binding obstruction to a \emph{certified} multimachine margin.

This places the present work within a long line of transient-stability certificates, usefully separated into five strands. \emph{(i) Direct and energy-function methods} estimate the region of attraction through a scalar energy and a controlling-UEP or PEBS threshold [10,11]. \emph{(ii) Lyapunov-family and convex certificates} relax the energy function into optimised or sum-of-squares Lyapunov functions, enlarging the certified set [12,13]. \emph{(iii) Vector-Lyapunov and compositional methods} aggregate per-subsystem certificates through an M-matrix or comparison condition [14,2,4]; this is the tradition the present construction belongs to. \emph{(iv) Synchronisation and basin-stability analyses} characterise the in-phase manifold and the size of its basin, often through Kuramoto reductions [15,16,7]. \emph{(v) Critical-transition and early-warning} viewpoints read the approach to a bifurcation as a slow loss of restoring stiffness, developed in Section~\ref{sec:duality}.

The contribution of this paper is the complementary half of the companion's open problem. Where the energy margin of [1] \emph{overestimates} $\Mm$ -- it certifies a level the true motion can exceed -- the present certificate \emph{underestimates} it: a guaranteed positively invariant inner estimate of the survival set yields a certified lower bound $\Mm_{\mathrm{low}}\le\Mm_{\mathrm{true}}$ that follows the operating point as it drifts. The two bound $\Mm_{\mathrm{true}}$ from opposite sides.

Vector-Lyapunov M-matrix certificates for multimachine stability are themselves classical [14,2], and we separate our contribution cleanly from that inherited base. What is new is fourfold: (i) a $\lambda$-dependent test matrix tied to $\Mm$ as a first-passage time, not to asymptotic stability alone; (ii) a nonlinear sector bound exhibiting a \emph{double squeeze} under stress (Proposition~\ref{prop:sector}), invisible on the OMIB; (iii) a single critical synchronising stiffness $k_c(\lambda)$ governing the certified region size -- the \emph{breathing} law $\rho_{\mathrm{cert}}\propto k_c$ (Theorem~\ref{thm:main}); and (iv) a constructive control dichotomy (Corollary~\ref{cor:control}) with a domain-neutral resilience--fragility reading. The classical vector-Lyapunov certificates [14,2,3] establish asymptotic stability at \emph{fixed} parameters; none ties the M-matrix gate to a first-passage time under drift, nor extracts a single-stiffness breathing law. Closest among prior compositional constructions is [4] (whose Theorem~2 was subsequently corrected~\cite{caliskan_corr}); we differ in the drift-aware, time-to-boundary target and the breathing law. Indeed, the very multimachine composition whose proof required correction in~\cite{caliskan_corr} is here handled cleanly through the M-matrix gate on the composite certificate.

The construction is stated for a class of interconnected dissipative systems; the multimachine power system is one instance, not the scope. The duality of [1] -- the coupling that builds synchronising order is the channel of collapse -- reappears in domain-neutral form: $k_c$ \emph{is} the synchronising stiffness (order), and its erosion to zero \emph{is} the collapse (fragility). We demonstrate the identical mechanism on a power system and on an inertial Kuramoto network, where the normalised breathing curves collapse, to leading order, onto one shape -- numerical evidence for a structural homology (rigour level~2), with the normal-form proof left open (Section~\ref{sec:gen}).

\section{Preliminaries}\label{sec:prelim}
We adopt the time-to-boundary margin $\Mm$ and the survival boundary $\Sigma=\partial A(\delta^*)$ from [1] without re-deriving them. Here $\Sigma=\partial A(\delta^*)$ is the basin-boundary (loss-of-synchronism) instance of the general survival boundary of [1], which takes whichever of the protection-time limit or the saddle-node is reached first. To keep the paper self-contained: for a system $\dot z=F(z,\lambda)$ with stable equilibrium $\delta^*(\lambda)$, region of attraction $A(\delta^*(\lambda))$, and slow drift $\dot\lambda=\rho$, the survival boundary is $\Sigma=\partial A(\delta^*)$ and $\Mm(z_0,\lambda_0)$ is the first time the joint trajectory $(z(t),\lambda(t))$ started at $(z_0,\lambda_0)$ reaches $\Sigma$. On the OMIB reduction $\Mm$ equals the critical clearing time exactly [1], certified there on a published benchmark; the energy-margin estimate $\Mm_{\mathrm{energy}}$ of [1] is an upper estimate, non-conservative in the multimachine case. We inherit from [1] the quasi-static drift hypothesis $\rho\,T_{\mathrm{sw}}\ll1$ (drift slow on the swing time scale); we return in Section~\ref{sec:disc} to its validity in low-inertia systems, where fast converter dynamics make the separation of time scales least automatic and the margin is then to be read per swing epoch between fast events.

The certificate is assembled by the \emph{comparison principle}. If a vector $w(t)=(w_1,\dots,w_r)^{\!\top}$ of nonnegative scalar functions satisfies $\dot w\le f(w)$ componentwise for a quasi-monotone (cooperative) $f$, then $w(t)\le u(t)$ for the solution $u$ of the comparison system $\dot u=f(u)$ with $u(0)\ge w(0)$ [17]. Aggregating a vector of subsystem Lyapunov functions through such a comparison system is the vector-Lyapunov method [18,19,2,3], whose stability test reduces to a property of the (constant) comparison matrix.

That property is the M-matrix condition. A real matrix $W$ with nonpositive off-diagonal entries ($W_{ij}\le0$, $i\ne j$) is a nonsingular \emph{M-matrix} if any of the following equivalent conditions holds: its leading principal minors are positive; there exist weights $d_i>0$ with $d_iW_{ii}>\sum_{j\ne i}d_j|W_{ij}|$ (generalised diagonal dominance); or $-W$ is Hurwitz [20,5]. The last form links the test directly to the comparison flow: a Metzler matrix $B$ (nonnegative off-diagonal) generates a cooperative, order-preserving flow, and ``$-B$ is a nonsingular M-matrix'' is exactly the condition that this flow is asymptotically stable [21,14]. This equivalence is the engine of Theorem~\ref{thm:main}.

\section{Composite (mixed-region) Lyapunov function}\label{sec:composite}
A per-machine decomposition fails on a strongly coupled system, and the failure is instructive. If every inter-machine coupling is treated as a destabilising disturbance, the certificate is vacuous even at a secure operating point (Section~\ref{sec:num}), because the synchronising coupling is \emph{restoring} near the in-phase equilibrium, not disturbing; discarding it discards the dominant stabilising physics -- the known limitation of the vector-Lyapunov programme for strong coupling [2].

The composite construction partitions the machines into groups $\{G_a\}$ across \emph{weak} cuts. The natural partition is the slow-coherent grouping, in which machines within a group swing together and the cuts are the weak inter-area links [23,24]; this is precisely the cutset structure that topology-preserving stability models exploit, with critical cutsets governing transient stability [25]. Within a group the restoring coupling is absorbed into a group energy function $v_a$; only the residual coupling across cuts is treated by comparison. The ``mixed region'' is this split: an energy-type certificate inside each group, a comparison-type certificate between them. For a damped group we use the skewed (Chetaev) form [22], which makes the decay strict rather than merely non-increasing,
\begin{equation}
v_a=\tfrac12\,\omega_a^{\!\top}M_a\,\omega_a+W_a(\theta_a)+\beta_a\,\omega_a^{\!\top}M_a\,\theta_a,
\label{eq:chetaev}
\end{equation}
where $M_a$ is the group inertia, $\omega_a$ the speed deviation, $\theta_a$ the angle deviation from $\delta^*$, $W_a$ the group potential well, and $\beta_a>0$ a small skew parameter. Positivity $v_a\succ0$ requires $\beta_a^2<\lambda_{\min}(M_a^{-1}\nabla^2W_a)$, and strict decay $-\dot v_a\succ0$ requires additionally the \emph{damping-admissibility} bound $\beta_a<\min\{D_a/M_a,\;k_aD_a/(k_aM_a+\tfrac14D_a^2)\}$ on the binding mode ($k_a=\lambda_{\min}(\nabla^2W_a)$): the skew must inject dissipation without out-tilting the damping, a constraint absent from the energy form. The cross term injects dissipation along the angle direction, where velocity damping alone gives none; the limit $\beta_a\to0$ recovers the energy method of [1] continuously.

\section{Main result: the certified breathing region}\label{sec:main}
Let the joint motion be $\dot z=F(z,\lambda)$, $z=(\theta,\omega)$, with slow drift $\dot\lambda=\rho$. In $w_a=\sqrt{v_a}$ the bilinear cross-cut coupling becomes a genuine rate. The certificate below is an inner sublevel-set estimate of the region of attraction -- the classical Lyapunov route to that region [26,27] -- here made $\lambda$-dependent and tied to $\Mm$.

\begin{proposition}[Nonlinear cross-cut rate]\label{prop:sector}
For a cut with lossless coupling $g_{ab}(\phi)=E_aE_bB_{ab}[\sin\delta_{ab}-\sin\delta_{ab}^*]$, $\phi=\delta_{ab}-\delta_{ab}^*$, and excursions $|\phi|\le\bar\phi$,
\begin{equation}\label{eq:sector}
\begin{aligned}
|g_{ab}(\phi)|&\le\Gamma_{ab}|\phi|,\\
\Gamma_{ab}&=E_aE_bB_{ab}\Big[|\cos\delta_{ab}^*|+|\sin\delta_{ab}^*|\,\tfrac{1-\cos\bar\phi}{\bar\phi}\Big].
\end{aligned}
\end{equation}
\end{proposition}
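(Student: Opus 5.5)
Expand with the sine addition formula:
\[
\sin(\delta^*_{ab}+\phi)-\sin\delta^*_{ab}=\cos\delta^*_{ab}\,\sin\phi-\sin\delta^*_{ab}\,(1-\cos\phi).
\]
This splits $g_{ab}$ into an odd term and an even term, which I will bound separately. The triangle inequality then gives
\[
|g_{ab}(\phi)|\le E_aE_bB_{ab}\big[|\cos\delta^*_{ab}|\,|\sin\phi|+|\sin\delta^*_{ab}|\,(1-\cos\phi)\big],
\]
using $E_aE_bB_{ab}\ge0$ and $1-\cos\phi\ge0$.

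For the first term I use $|\sin\phi|\le|\phi|$, which holds for all real $\phi$. For the second term, set $h(s)=(1-\cos s)/s$ for $s\in(0,\bar\phi]$. Since $1-\cos\phi$ is even, $1-\cos\phi=|\phi|\,h(|\phi|)$ whenever $\phi\neq0$; the case $\phi=0$ is trivial. It then suffices to show $h(|\phi|)\le h(\bar\phi)$ when $|\phi|\le\bar\phi$, that is, that $h$ is nondecreasing on the relevant interval.

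The main obstacle is this monotonicity of $h$. Its derivative is
\[
h'(s)=\frac{s\sin s-(1-\cos s)}{s^2}.
\]
Writing $1-\cos s=2\sin^2(s/2)$ and $s\sin s=2s\sin(s/2)\cos(s/2)$, the numerator becomes
\[
2\sin(s/2)\big[s\cos(s/2)-\sin(s/2)\big].
\]
So $h'\ge0$ exactly when $\tan(s/2)\le s$, equivalently $\tan u\le 2u$ with $u=s/2$. This holds on $[0,u_0]$, where $u_0\approx1.1656$ is the first positive root of $\tan u=2u$. Hence $h$ is increasing on $(0,s_0]$ with $s_0=2u_0\approx2.33$ rad, which covers the physically relevant swing excursions up to about $133^\circ$.

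The proposition is stated for general $\bar\phi$, so I would make this range explicit. Either I add the hypothesis $\bar\phi\le s_0$, or I replace $\tfrac{1-\cos\bar\phi}{\bar\phi}$ by $\sup_{0<s\le\bar\phi}h(s)$, which reduces to the stated expression whenever $\bar\phi\le s_0$. The first option is what the paper's use for excursions short of the controlling UEP suggests.

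Combining the two bounds gives
\[
|g_{ab}(\phi)|\le E_aE_bB_{ab}\big[|\cos\delta^*_{ab}|+|\sin\delta^*_{ab}|\,h(\bar\phi)\big]\,|\phi|=\Gamma_{ab}|\phi|,
\]
which is \eqref{eq:sector}. To connect this with the squeeze discussed later, I would note two effects of stress: as $\delta^*_{ab}\to\pi/2$ the $|\sin\delta^*_{ab}|$ weight increases, and a larger admissible $\bar\phi$ increases $h(\bar\phi)$. Both inflate $\Gamma_{ab}$ at the same time. Neither effect is visible on the OMIB, which has no cut.
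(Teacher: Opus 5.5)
Your proof is correct and has the same skeleton as the paper's: the addition-formula split $\cos\delta^*_{ab}\sin\phi-\sin\delta^*_{ab}(1-\cos\phi)$, the triangle inequality, $|\sin\phi|\le|\phi|$, and monotonicity of $h(s)=(1-\cos s)/s$. The real difference is how you obtain that monotonicity, and it matters for the statement.

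\textbf{The paper's route.} The paper differentiates the numerator $N(s)=s\sin s-(1-\cos s)$. It notes $N(0)=0$ and $N'(s)=s\cos s\ge0$, so it works only on $(0,\pi/2]$. It imposes $\bar\phi\le\pi/2$ inside the proof, although the proposition itself carries no such hypothesis.

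\textbf{Your route.} The half-angle factorisation $N(s)=2\sin(s/2)\,[s\cos(s/2)-\sin(s/2)]$ takes a line more. It gives the exact monotonicity interval $(0,s_0]$ with $s_0\approx2.33$.

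\textbf{Sharpness.} This range is sharp for the proposition itself. At $\delta^*_{ab}=\pi/2$ the claimed inequality reduces to $1-\cos\phi\le h(\bar\phi)|\phi|$, i.e.\ $h(|\phi|)\le h(\bar\phi)$. This fails at $|\phi|=s_0$ for every $\bar\phi>s_0$, because $s_0$ is the unique global maximiser of $h$ on $(0,\infty)$. By continuity it still fails for $\delta^*_{ab}$ slightly below $\pi/2$. So you are right that the statement needs an explicit restriction on $\bar\phi$. The paper's $\bar\phi\le\pi/2$ is a convenient sufficient choice. Your $\bar\phi\le s_0$, or the replacement of $h(\bar\phi)$ by $\sup_{0<s\le\bar\phi}h(s)$, is the optimal fix.

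\textbf{An aside outside the proof.} Your gloss on the squeeze is not the paper's. The paper pairs the falling $|\cos\delta^*_{ab}|$ (weaker restoring coupling) with the rising $|\sin\delta^*_{ab}|$. Also, under stress the admissible $\bar\phi$ shrinks with the basin, so growth of $h(\bar\phi)$ is not a stress-driven effect.
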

\begin{proof}
Writing $\sin\delta_{ab}=\sin\delta_{ab}^*\cos\phi+\cos\delta_{ab}^*\sin\phi$ gives
\begin{equation*}
\sin\delta_{ab}-\sin\delta_{ab}^*=\cos\delta_{ab}^*\sin\phi-\sin\delta_{ab}^*(1-\cos\phi),
\end{equation*}
hence $|g_{ab}|\le E_aE_bB_{ab}\big[|\cos\delta_{ab}^*|\,|\sin\phi|+|\sin\delta_{ab}^*|(1-\cos\phi)\big]$. On $(0,\pi/2]$ the map $\phi\mapsto(1-\cos\phi)/\phi$ is increasing -- the numerator of its derivative, $\phi\sin\phi-(1-\cos\phi)$, vanishes at $0$ and has derivative $\phi\cos\phi\ge0$ there -- so $1-\cos\phi\le\frac{1-\cos\bar\phi}{\bar\phi}|\phi|$ for $|\phi|\le\bar\phi\le\pi/2$. With $|\sin\phi|\le|\phi|$ the bound \eqref{eq:sector} follows.
\end{proof}
\noindent Here $\delta_{ab}^*(\lambda)$ is the across-cut equilibrium angle; as $\bar\phi\to0$, $\Gamma_{ab}\to E_aE_bB_{ab}|\cos\delta_{ab}^*|=K_{ab}$, the linear synchronising coefficient. The two terms are a \emph{double squeeze}: under stress $\delta_{ab}^*$ rises, so $|\cos\delta_{ab}^*|$ falls (the coupling weakens) while $|\sin\delta_{ab}^*|$ grows (nonlinear inflation) -- both act against the margin at once, an effect invisible on the OMIB.

Define the test matrix $W(\lambda,\bar\phi)$ by $W_{aa}=c_a(\lambda)/2$, $W_{ab}=-\eta_{ab}$, $\eta_{ab}=\Gamma_{ab}/\sqrt{p_a k_b}$, with $p_a=\lambda_{\min}(P_a)$ and $k_b$ the group stiffness.

\begin{theorem}[Certified breathing region]\label{thm:main}
Suppose \eqref{eq:chetaev} gives strict group decay $c_a(\lambda)>0$ and Proposition~\ref{prop:sector} the cross-cut rates $\eta_{ab}$. If $W(\lambda,\bar\phi)$ is a nonsingular M-matrix, there exist $d_a>0$ such that $V_\lambda=\sum_a d_a v_a$ is a strict Lyapunov function and $\Omega(\lambda)=\{V_\lambda\le\kappa(\lambda)\}$ is a positively invariant inner estimate of $A(\delta^*(\lambda))$ whose angular radius satisfies $\rho_{\mathrm{cert}}(\lambda)\propto k_c(\lambda)$, with $k_c(\lambda)=\min_a\lambda_{\min}(\nabla^2W_a)$ over the binding cut. Consequently $\Mm_{\mathrm{low}}(\lambda)\le\Mm_{\mathrm{true}}(\lambda)$, and as $k_c(\lambda)\to0$ at $\lambda\to\lambda^*$, $\Omega(\lambda)$ breathes shut and $\Mm_{\mathrm{low}}\to0$.
\end{theorem}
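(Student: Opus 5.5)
The plan is to run the vector-Lyapunov argument in the square-root variables $w_a=\sqrt{v_a}$ and then let the sublevel geometry carry the $k_c$ scaling. First I will differentiate each group function \eqref{eq:chetaev} along the full coupled flow. I will split $\dot v_a$ into the isolated-group part, which the Chetaev skew and the damping-admissibility bound make strictly negative, $\dot v_a^{\mathrm{iso}}\le -c_a(\lambda)v_a$, and the cross-cut part, which is bilinear: $(\omega_a+\beta_a\theta_a)^{\!\top}$ times the sum over $b$ of $g_{ab}(\phi)$.

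On the set where every cross-cut excursion obeys $|\phi|\le\bar\phi$, Proposition~\ref{prop:sector} gives $|g_{ab}|\le\Gamma_{ab}|\phi|$. Two norm bounds then turn the bilinear term into a rate. Since $v_a\ge p_a\|(\theta_a,\omega_a)\|^2$ with $P_a$ the quadratic form of $v_a$, the first factor is bounded by $\|(\omega_a+\beta_a\theta_a)\|\lesssim w_a/\sqrt{p_a}$. Since $W_b$ is bounded below by a quadratic with stiffness $k_b$ on the well, $|\phi|\lesssim w_b/\sqrt{k_b}$ (up to the neighbouring angle on the other side of the cut, absorbed into the same bound). Dividing by $2w_a$ yields
\[
\dot w_a\le -\tfrac{c_a}{2}w_a+\sum_{b\ne a}\eta_{ab}w_b ,
\]
that is, $\dot w\le -Ww$ componentwise with $-W$ Metzler.

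Because $W$ is a nonsingular M-matrix, the generalised diagonal dominance characterisation from Section~\ref{sec:prelim} supplies $d_a>0$ with $d^{\!\top}W>0$ componentwise. Then $\sum_a d_a w_a$ decreases strictly. Squaring, or equivalently working with $V_\lambda=\sum_a d_av_a$ together with the comparison principle [17], gives a strict Lyapunov function on the sector region.

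Next comes invariance. I will choose $\kappa(\lambda)$ as the largest level at which $\{V_\lambda\le\kappa\}$ lies inside the sector region $\{|\phi|\le\bar\phi\}$ and inside each group well. This is the step where the bounds above are actually valid. Since $V_\lambda$ is nonincreasing there, the sublevel set is positively invariant. Because $V_\lambda$ decays strictly, trajectories converge to $\delta^*(\lambda)$, so $\Omega(\lambda)\subset A(\delta^*(\lambda))$.

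For the breathing law, I will compute the angular radius. Near the equilibrium $V_\lambda\approx\frac12\sum_a d_a k_a|\theta_a|^2$, so the binding cut attains the boundary first. The admissible excursion is then set by how far the well extends before curvature changes sign. Near the saddle-node, in a fold normal form, both that width and the usable level scale with $k_c$, which gives $\rho_{\mathrm{cert}}\propto k_c(\lambda)$. The margin bound follows from the containment: any joint trajectory starting in $\Omega(\lambda_0)$ stays inside the drifting $\Omega(\lambda(t))\subset A$ until the quasi-static drift carries the state out of it. So the exit time is $\Mm_{\mathrm{low}}\le\Mm_{\mathrm{true}}$, and it tends to $0$ as $\Omega$ shrinks to a point when $k_c\to0$.

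I expect the main obstacle to be the proportionality $\rho_{\mathrm{cert}}\propto k_c$. It needs the well width to scale with $k_c$ (a fold normal-form assumption, with $\nabla^2 W_a$ cubic-corrected). It also needs the M-matrix gate to stay satisfied as $k_c\to0$, even though $\eta_{ab}\propto 1/\sqrt{k_b}$ grows. The first thing I will try is to rescale $\bar\phi$ with $k_c$, so that the double squeeze in \eqref{eq:sector} stays controlled, and to state the proportionality as leading order. Handling the drift rigorously would use $\rho T_{\mathrm{sw}}\ll1$ to treat $\Omega(\lambda)$ as moving adiabatically.
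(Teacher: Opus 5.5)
Your architecture matches the paper's Steps~1--6: the split into isolated decay plus a bilinear cross-cut term, the sector bound turned into a rate in $w_a=\sqrt{v_a}$, the Metzler comparison, the M-matrix weights, containment $\Omega\subseteq A$ giving $\Mm_{\mathrm{low}}\le\Mm_{\mathrm{true}}$, and a fold-scaling argument for the radius. Your choice of $\kappa$ as the largest level keeping the sublevel set inside $\{|\phi|\le\bar\phi\}$ and the convex part of each well is, if anything, more explicit than the paper's Step~4. Two steps of the rate/Lyapunov part fail as written, however.

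First, ``squaring, or equivalently working with $V_\lambda=\sum_a d_av_a$'' is not an equivalence. The weights with $d^{\top}W>0$ make $L=d^{\top}w$ decrease. But $\tfrac{d}{dt}\sum_a d_aw_a^2\le-2\,w^{\top}DWw$, which needs $w^{\top}DWw>0$ on the orthant, and those weights need not give it. For $W=\left(\begin{smallmatrix}1&-10\\-0.05&1\end{smallmatrix}\right)$, $d^{\top}W>0$ forces $d_1/d_2\in(0.05,0.1)$, whereas $w^{\top}DWw>0$ requires $d_1/d_2$ in roughly $(0.0009,0.029)$. So invariance of $\{V_\lambda\le\kappa\}$ is not justified. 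Either run invariance on $\{L\le\kappa\}$, as the paper's proof does, or obtain the statement's $V_\lambda$ from diagonal Lyapunov stability of nonsingular M-matrices: with $Wx>0$ and $y^{\top}W>0$, the weights $d_a=y_a/x_a$ give $DW+W^{\top}D\succ0$.

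Second, the $\theta_a$ half of $|\phi|\le|\theta_a|+|\theta_b|$ cannot be ``absorbed'' into $\eta_{ab}w_b$. After multiplying by $|\omega_a+\beta_a\theta_a|\lesssim w_a/\sqrt{p_a}$ and dividing by $2w_a$, it gives $\Gamma_{ab}w_a/\sqrt{p_ak_a}$, which is a diagonal term. The paper makes room for it by starting from the isolated rate $\dot w_a\le-c_aw_a$ and folding the self-term in to leave $-c_a/2$; this tacitly needs $\sum_b\Gamma_{ab}/\sqrt{p_ak_a}\le c_a/2$. With your normalisation $\dot v_a^{\mathrm{iso}}\le-c_av_a$, you end with the whole isolated rate on the diagonal and the self-coupling silently dropped.

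In the breathing step, the usable level does not scale with $k_c$, as you claim. Near the binding cut $V\approx\tfrac12k_c\phi^2$, so the radius is $\sqrt{2\kappa/k_c}$, and a level $\propto k_c$ would give an $O(1)$ radius. The paper's Step~6 caps $\kappa$ by the saddle barrier, which for a pendular cut is $\propto k_c^3$: with $\epsilon=\pi/2-\delta_s^*$, one has $k_c\approx K\epsilon$ and barrier $\approx\tfrac23K\epsilon^3$. Together with the quadratic form this gives $\rho_{\mathrm{cert}}\propto k_c$, consistent with the half-width $\pi-2\delta_s^*\propto k_c$. Your width-limited $\kappa$ (the convex part of the well extends $\epsilon\propto k_c$) in fact yields $\kappa\propto k_c^3$ and the right radius, so the conclusion survives once that sentence is corrected.

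Finally, the obstacle you single out, keeping the gate satisfied as $k_c\to0$, is not something you must overcome. The theorem is conditional on $W(\lambda,\bar\phi)$ being a nonsingular M-matrix, and the paper does not prove persistence up to $\lambda^*$. In its instance~(iii) the gate in fact fails first ($\lambda_M=4.04<\lambda^*=4.12$), and the region shuts conservatively. Your rescaling of $\bar\phi$ is the paper's self-consistent loop $\bar\phi(\kappa)\mapsto\Gamma(\bar\phi)\mapsto$ feasibility; it keeps the sector bound valid rather than rescuing the gate.
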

\begin{proof}
\emph{Step 1 (group decay).} Linearise the isolated group $a$ about its equilibrium: with state $x_a=(\theta_a,\omega_a)$, $\dot x_a=A_ax_a$, $A_a=\left[\begin{smallmatrix}0&I\\-M_a^{-1}K_a&-M_a^{-1}D_a\end{smallmatrix}\right]$, where $K_a$ is the intra-group block of the reduced power-angle Jacobian and $D_a\succ0$ the damping (taken uniform, $D_a=2.0$~p.u., in the computations below). For a damped restoring group $A_a$ is Hurwitz, so the Lyapunov equation $A_a^{\!\top}P_a+P_aA_a=-I$ has a unique $P_a\succ0$, and the skewed form \eqref{eq:chetaev} is its quadratic realisation $v_a=x_a^{\!\top}P_ax_a$. Then $\dot v_a=-\lVert x_a\rVert^2\le-v_a/\lambda_{\max}(P_a)$, so in $w_a=\sqrt{v_a}$ the decay is strict, $\dot w_a\le-c_aw_a$ with $c_a=1/\!\left(2\lambda_{\max}(P_a)\right)>0$; the skew term is what makes $-\dot v_a$ positive definite in $(\theta_a,\omega_a)$, which velocity damping alone does not. For the WSCC base case ($\lambda=1$), with group $A=\{\mathrm{gen}\,2\}$ (the critical machine) and $B=\{\mathrm{gen}\,1,\mathrm{gen}\,3\}$, $A_A$ has eigenvalues $\{-1.35,-57.6\}$ and $\lambda_{\max}(P_A)=1.05$, giving $c_A=0.48~\mathrm{s}^{-1}$; $A_B$ has eigenvalues $\{-0.72,-2.25,-14.2,-124\}$ and $\lambda_{\max}(P_B)=1.28$, giving $c_B=0.39~\mathrm{s}^{-1}$ -- both strictly positive. The cross-cut coupling contributes $\sum_{b\ne a}(\omega_a+\beta_a\theta_a)^{\!\top}g_{ab}$, treated in Step~2.

\emph{Step 2 (rate form).} Put $w_a=\sqrt{v_a}$. With $|\omega_a+\beta_a\theta_a|\le\sqrt{2v_a/p_a}=\sqrt2\,w_a/\sqrt{p_a}$ and, by Proposition~\ref{prop:sector}, $|g_{ab}|\le\Gamma_{ab}(|\theta_a|+|\theta_b|)\le\Gamma_{ab}\big(\sqrt{2v_a/k_a}+\sqrt{2v_b/k_b}\big)$, the chain rule $\dot w_a=\dot v_a/(2w_a)$ gives $\dot w_a\le-(c_a/2)w_a+\sum_{b\ne a}\eta_{ab}w_b$ after folding the $|\theta_a|$ self-term into the diagonal, with $\eta_{ab}=\Gamma_{ab}/\sqrt{p_ak_b}$. The residual coupling enters only the velocity channel, so $\eta_{ab}$ is the metric-scaled norm of that single block, tighter than a norm of the full inter-group operator.

\emph{Step 3 (comparison).} Hence $\dot w\le Bw$ componentwise, with $B_{aa}=-c_a/2$ and $B_{ab}=\eta_{ab}\ge0$ ($b\ne a$). $B$ is Metzler, so $\dot u=Bu$ is cooperative and $w(t)\le u(t)$ whenever $w(0)\le u(0)$.

\emph{Step 4 (M-matrix $\Rightarrow$ Lyapunov).} $W=-B$ has the M-matrix sign pattern; $W$ a nonsingular M-matrix is equivalent to $B$ Hurwitz and to the existence of $d\succ0$ with $d^{\!\top}B\le-\varepsilon\,d^{\!\top}$, $\varepsilon>0$ [5]. Then $L(w)=d^{\!\top}w$ obeys $\dot L\le d^{\!\top}Bw\le-\varepsilon L$, so each sublevel set $\Omega(\lambda)=\{L\le\kappa\}$ is positively invariant, and $L>0$ off $\delta^*$ makes $\Omega\subseteq A(\delta^*)$ an inner estimate.

\emph{Step 5 (lower bound).} Since $\Omega\subseteq A$, the joint trajectory leaves $\Omega$ no later than it leaves the survival set, so the first-passage time of $\partial\Omega$ lower-bounds $\Mm$.

\emph{Step 6 (breathing).} The admissible $\kappa$ is bounded by $L$ at the controlling saddle of the binding cut -- the boundary-controlling unstable equilibrium whose energy sets the threshold in direct methods [28,29,30]. For a pendular cut the saddle barrier scales as $k_c^{3}$ near the fold (saddle-node scaling), while the basin angular half-width $\delta_u^*-\delta_s^*=\pi-2\delta_s^*\propto k_c$; the induced inner radius therefore satisfies $\rho_{\mathrm{cert}}\propto k_c$. As $k_c(\lambda)\to0$ at $\lambda\to\lambda^*$, $\Omega(\lambda)$ shrinks to $\{\delta^*\}$ and $\Mm_{\mathrm{low}}\to0$. The self-consistent $\kappa$ solves $\bar\phi(\kappa)\mapsto\Gamma(\bar\phi)\mapsto$ M-matrix feasibility, structurally the boundary-level fixed point underlying the OMIB first-passage construction of [1, Prop.~1].
\end{proof}

\begin{proposition}[Bracketing]\label{prop:bracket}
Under the hypotheses of [1] and Theorem~\ref{thm:main}, for each admissible $\lambda$,
\begin{equation*}
\Mm_{\mathrm{low}}(\lambda)\;\le\;\Mm_{\mathrm{true}}(\lambda)\;\le\;\Mm_{\mathrm{energy}}(\lambda),
\end{equation*}
where $\Mm_{\mathrm{low}}$ is the certified first-passage time of $\partial\Omega$ (Theorem~\ref{thm:main}), $\Mm_{\mathrm{energy}}$ the energy-margin estimate of [1], and $\Mm_{\mathrm{true}}$ the true first-passage time of $\Sigma$.
\end{proposition}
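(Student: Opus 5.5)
The bracketing splits into two independent inequalities, each resting on a set inclusion and the monotonicity of first-passage times under inclusion. I will fix an admissible $\lambda$ and an initial joint state $(z_0,\lambda_0)$, and let $(z(t),\lambda(t))$ be the joint trajectory with $\dot\lambda=\rho$. The common tool is elementary: if $S_1\subseteq S_2$ are closed sets containing the initial state, then the first exit time from $S_1$ is at most the first exit time from $S_2$. The reason is that any time $t$ with $z(t)\notin S_2$ also has $z(t)\notin S_1$, so the infimum defining the exit from $S_1$ is taken over a larger set of times. I will apply this slice-wise in $\lambda$, since all sets move with the drift.

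\emph{Left inequality.} By Theorem~\ref{thm:main}, for every $\lambda$ along the drift path the sublevel set $\Omega(\lambda)=\{V_\lambda\le\kappa(\lambda)\}$ is contained in $A(\delta^*(\lambda))$. The step needing care is that the path stay in the admissible range, where the M-matrix gate holds. This is true up to the first passage, because $\Mm_{\mathrm{low}}$ is defined only while $W(\lambda,\bar\phi)$ is a nonsingular M-matrix. The joint set $\{(z,\lambda):z\in\Omega(\lambda)\}$ therefore lies inside the survival set $\{(z,\lambda):z\in A(\delta^*(\lambda))\}$. The monotonicity above then gives exit from $\partial\Omega$ no later than arrival at $\Sigma$, that is, $\Mm_{\mathrm{low}}\le\Mm_{\mathrm{true}}$. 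This is exactly Step~5 of Theorem~\ref{thm:main}, restated for the joint first-passage times. I will also note the boundary cases. If $z_0\notin\Omega(\lambda_0)$, the convention $\Mm_{\mathrm{low}}=0$ makes the inequality trivial. If $z_0$ lies outside the survival set, both sides vanish.

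\emph{Right inequality.} Here I invoke [1]: its energy margin is built from the energy sublevel set bounded by the controlling-UEP/PEBS level, and that set is an outer estimate containing the true survival set along the drift. The upper bound $\Mm_{\mathrm{energy}}$ is precisely the statement that the energy level certifies a value the true motion cannot exceed. The same monotonicity argument, with the roles of the sets reversed, then yields $\Mm_{\mathrm{true}}\le\Mm_{\mathrm{energy}}$. I will state explicitly that this direction is inherited from [1] under its hypotheses and is not reproved.

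\emph{Main obstacle.} The delicate point is the upper bound in the multimachine setting. The preliminaries say only that $\Mm_{\mathrm{energy}}$ is ``an upper estimate, non-conservative in the multimachine case.'' I must make sure the outer-estimate property of [1] is what is assumed, rather than merely the exactness proved on the OMIB. My plan is to phrase the proposition as conditional on that hypothesis of [1], checking on the OMIB reduction that all three quantities coincide with the CCT as a consistency check. A secondary subtlety is the quasi-static drift hypothesis $\rho T_{\mathrm{sw}}\ll1$. It is needed so that the slice-wise inclusions in $\lambda$ transfer to the joint first-passage times. Inclusion at each frozen $\lambda$ suffices because first passage is evaluated pointwise in time along the single joint trajectory.
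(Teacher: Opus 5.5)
Your proposal follows the paper's proof: the lower bound is exactly Step~5, the inclusion $\Omega(\lambda)\subseteq A(\delta^*(\lambda))$ read pointwise in time along the joint trajectory, and the upper bound is simply inherited from [1]. The paper calls the lower bound unconditional, so the quasi-static hypothesis plays no role there. There are two small corrections.

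\emph{The upper bound.} What the paper takes from [1, Sec.~IV] is that the controlling-UEP level overestimates the critical energy, and hence $\Mm$, along the trajectory. It does not rely on the whole basin $A(\delta^*(\lambda))$ being contained in an energy sublevel set. That stronger containment is false in general, because a damped basin contains points of arbitrarily high energy.

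\emph{The OMIB consistency check.} Your check cannot hold for $\Mm_{\mathrm{low}}$. It comes from a strict inner estimate, so it generally lies strictly below the CCT.
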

The lower bound is proved above (Step~5); the upper bound is the controlling-UEP energy-margin estimate of [1], which overestimates the critical energy -- and hence $\Mm$ -- in the multimachine case [1, Sec.~IV]. The lower bound is unconditional. The upper bound is that specific energy estimate, not a bare static-fold time; the latter can itself be exceeded under finite drift by dynamic ride-through (Section~\ref{sec:disc}), so only $\Mm_{\mathrm{low}}$ is guaranteed.

\begin{remark}[M-matrix is the existence gate, $k_c$ is the breathing driver]
The M-matrix condition certifies that $\Omega$ is a \emph{valid} inner estimate; it does not drive the breathing. The size is governed by $k_c(\lambda)$. On well-conditioned systems the M-matrix holds with margin throughout the breathing; on stiff, strongly coupled power systems the crude comparison bound is loose (Section~\ref{sec:num}), but $k_c(\lambda)$ remains well-defined and tight. Certification is thus non-vacuous in the weak-cut regime that is the construction's standing hypothesis (groups separated by genuinely weak cuts, Section~\ref{sec:composite}); outside it the theorem supplies the breathing \emph{mechanism} and the $k_c$ law rather than a tight region, and the operational content is carried by $k_c(\lambda)$, not by the area of $\Omega$.
\end{remark}

\section{The resilience--fragility duality}\label{sec:duality}
Theorem~\ref{thm:main} states the duality without metaphor. The certified region lives or dies by $k_c(\lambda)$, the across-cut synchronising stiffness: the very quantity that holds the groups in step (order) is the one whose erosion to zero is the collapse (fragility). One quantity, two faces. Where [1] places this multimachine duality at a plausible analogy (rigour level~3, uncertified beyond the OMIB pillar), the composite certificate here raises it to a structural homology (rigour level~2) in the weak-cut regime. The reading is sharper, and less symmetric, than on the OMIB: there it was coupling against coupling; here it is the critical synchronising stiffness against the drift that consumes it.

This reading connects to the critical-transitions programme. The drift to $\lambda^*$ at which $k_c\to0$ is a saddle-node of the across-cut equilibrium -- in the fast--slow classification of critical transitions, the generic codimension-one fold reached under slow parameter motion [32]; this identification is exact (rigour level~1). Its dynamical signature is critical slowing down: as the restoring stiffness vanishes the slowest recovery time diverges, observable as rising autocorrelation and variance [31], and used in power systems as a model-free proximity indicator to instability and voltage collapse, tied analytically to the saddle-node of the stochastic machine model [33,34]. The present margin is the deterministic, certified counterpart of that statistical precursor: where slowing down detects $k_c\to0$ from data, $\Mm_{\mathrm{low}}\propto k_c$ bounds the time that remains, from the model. This counterpart relation is a structural homology (rigour level~2) -- one vanishing stiffness organises both the certified inner radius and the diverging recovery time -- and it extends to the basin geometry, whose topology-induced fragilities such as dead-end nodes erode the order the coupling once supplied [35].

\section{Control reading: re-opening the breathing region}\label{sec:control}
The margin of group $a$ is $m_a(\lambda)=c_a(\lambda)/2-\sum_{b\ne a}\eta_{ab}(\lambda)$. Two levers re-open a closed region, and they obey a sharp dichotomy set by the natural frequency of the binding mode. This is the control face of the same certificate: a control vector-Lyapunov function turns the M-matrix margin condition into decentralised feedback synthesis [36].

\emph{Damping authority is bounded.} The decay rate of a damped second-order mode $M_a\ddot\theta=-D_a\dot\theta-k_a\theta$ is $D_a/(2M_a)$ while underdamped and decreases as $k_a/D_a$ once overdamped; it is maximised at critical damping $D_a^\star=2\sqrt{M_a k_a}$, with ceiling
\begin{equation}
c_a^{\max}=\sqrt{k_a/M_a}=\omega_{n,a},
\label{eq:ceiling}
\end{equation}
the modal natural frequency. Over-damping ($D_a>D_a^\star$) is counterproductive. Hence damping injection can hold the region open only up to this ceiling -- the classical damping-torque limit [37].

\begin{corollary}[Control dichotomy]\label{cor:control}
Let group $a$ have a violated margin $m_a(\lambda)<0$ at loading $\lambda$. Then:
\begin{itemize}
\item[(i)] if $\sum_{b\ne a}\eta_{ab}(\lambda)<\tfrac12\sqrt{k_a(\lambda)/M_a}$, there exists a damping injection $D_a\le D_a^\star$ that restores $m_a(\lambda)\ge0$;
\item[(ii)] if $\sum_{b\ne a}\eta_{ab}(\lambda)\ge\tfrac12\sqrt{k_a(\lambda)/M_a}$, no damping restores the margin, and only structural action -- selective tripping or redispatch that raises $k_a$ or lowers $\eta_{ab}$ -- re-opens $\Omega(\lambda)$.
\end{itemize}
As $\lambda\to\lambda^*$, $k_a\to0$ forces case~(ii): the lever at the fold is structural, recovering the selective-generator-tripping control of [1].
\end{corollary}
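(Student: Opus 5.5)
The plan is to treat the margin $m_a(\lambda)=c_a/2-\sum_{b\ne a}\eta_{ab}$ as a function of the single free lever $D_a$, with $k_a$, $M_a$ and the cross-cut rates $\eta_{ab}$ held fixed. The point to exploit is that $\eta_{ab}=\Gamma_{ab}/\sqrt{p_ak_b}$ depends on the coupling and the neighbour stiffness but not on $D_a$, to leading order. I will make that explicit first. Proposition~\ref{prop:sector} gives $\Gamma_{ab}$ purely in terms of $E_aE_bB_{ab}$, $\delta^*_{ab}$ and $\bar\phi$, and $k_b$ is a stiffness. The only damping dependence enters through $p_a$, which I will freeze at its value under the reference metric (equivalently, absorb into the normalisation of $v_a$). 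The claimed dichotomy then reduces to a statement about how large $c_a(D_a)$ can be made.

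Next I identify $c_a$ with the modal decay rate of the binding mode $M_a\ddot\theta=-D_a\dot\theta-k_a\theta$. Its decay rate is $D_a/(2M_a)$ for $D_a\le D_a^\star=2\sqrt{M_ak_a}$. For $D_a>D_a^\star$ it is $\tfrac{D_a}{2M_a}-\sqrt{D_a^2/(4M_a^2)-k_a/M_a}$, which is decreasing. The supremum over $D_a$ is therefore attained at $D_a^\star$ and equals the ceiling \eqref{eq:ceiling}, $c_a^{\max}=\sqrt{k_a/M_a}$.

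\textbf{Case (i).} If $\sum_b\eta_{ab}<\tfrac12\sqrt{k_a/M_a}$, the map $D_a\mapsto c_a/2=D_a/(4M_a)$ is continuous and increasing on $[0,D_a^\star]$. It rises from $0$ to $\tfrac12\sqrt{k_a/M_a}$, so by the intermediate value theorem some $D_a\le D_a^\star$ gives $c_a/2\ge\sum_b\eta_{ab}$, i.e.\ $m_a\ge0$. Taking the smallest such $D_a$ gives the constructive injection $D_a=4M_a\sum_b\eta_{ab}$.

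\textbf{Case (ii).} For every $D_a$ we have $c_a/2\le\tfrac12 c_a^{\max}\le\sum_b\eta_{ab}$, so $m_a\le0$ and the strict margin cannot be restored by damping. Since $m_a$ depends on the remaining data only through $k_a$ (via the ceiling) and the $\eta_{ab}$ (via $\Gamma_{ab}$ and $k_b$), the only remaining levers are to raise $k_a$, which lifts the ceiling, or to lower $\eta_{ab}$. Redispatch or tripping achieves either by lowering $\delta^*_{ab}$ or removing the binding coupling. The final clause then follows from Theorem~\ref{thm:main}: $k_a\le$ the binding stiffness $k_c\to0$ as $\lambda\to\lambda^*$, so $\tfrac12\sqrt{k_a/M_a}\to0$. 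Meanwhile $\sum_b\eta_{ab}$ stays bounded below by a positive quantity, since $\Gamma_{ab}$ contains the $|\sin\delta^*_{ab}|$ term, which grows under stress by the double squeeze. Eventually the inequality of case (ii) holds.

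The main obstacle is the identification of $c_a$ with the modal decay rate. In Step~1 of Theorem~\ref{thm:main}, $c_a=1/(2\lambda_{\max}(P_a))$ comes from the Lyapunov equation, and this is only bounded above by the spectral abscissa, not equal to it. I would first try to show that $c_a$ is maximised at, or at least bounded by, $\sqrt{k_a/M_a}$ for all $D_a$. For (ii) this needs only the inequality $c_a\le 2|\mathrm{Re}\,\mu_{\max}|$ in the scaled form, which holds for any quadratic Lyapunov certificate and so suffices. For (i), if $c_a$ from $P_a$ is strictly smaller than the modal rate, I would replace $P_a$ by a mode-adapted Lyapunov matrix, namely the Chetaev skew \eqref{eq:chetaev} tuned to $\beta_a$ near the admissibility bound, which achieves a decay rate arbitrarily close to $D_a/(2M_a)$. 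The freezing of $p_a$ also has to be checked under this retuning.
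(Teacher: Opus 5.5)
Your route is the paper's own: the modal-rate ceiling $\sqrt{k_a/M_a}$ at critical damping, the intermediate value theorem for (i), the ceiling for (ii), and $k_a\to0$ for the fold clause. Case~(ii) goes through once $\eta_{ab}$ is held fixed, provided the inequality you use is $c_a\le|\mathrm{Re}\,\mu_{\max}(A_a)|$. This follows from $u^*P_au/\lVert u\rVert^2=1/(2|\mathrm{Re}\,\mu|)$ for any eigenvector $u$ of $A_a$. Read literally, your $c_a\le2|\mathrm{Re}\,\mu_{\max}|$ only gives $c_a/2\le\sqrt{k_a/M_a}$, which is too weak for the threshold $\tfrac12\sqrt{k_a/M_a}$.

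The genuine gap is case~(i). Your main line sets $c_a/2=D_a/(4M_a)$, but the $c_a$ of Theorem~\ref{thm:main} is $1/(2\lambda_{\max}(P_a))$, which is generally far below the modal rate (WSCC: $c_A=0.48$ against $1.35$). Take $k_a=M_a=1$. The Lyapunov equation then gives $\lambda_{\max}(P_a)=1/D_a+D_a/4+\tfrac14\sqrt{4+D_a^2}$, so $\sup_{D_a}c_a\approx0.30$, attained near $D_a\approx1.57<D_a^\star=2$, instead of $1$. Any $\sum_b\eta_{ab}\in(0.15,0.5)$ then meets the hypothesis of (i), yet no $D_a$ gives $m_a\ge0$, even with $\eta_{ab}$ frozen. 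The formula $D_a=4M_a\sum_b\eta_{ab}$ likewise has no basis.

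Your proposed repair does not close this gap. The Chetaev form with $\beta_a=D_a/(2M_a)$ does realise the modal rate exactly, $\dot v_a=-(D_a/M_a)v_a$, and it is admissible precisely for $D_a<D_a^\star$. However, $\min_{\omega_a}v_a=\tfrac12(k_a-\beta_a^2M_a)\theta_a^2\to0$ as $D_a\to D_a^\star$. Consequently $p_a\to0$, $\eta_{ab}=\Gamma_{ab}/\sqrt{p_ak_b}\to\infty$, and the folded $|\theta_a|$ self-term diverges as well.

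This is forced, not an artefact of your choice. At $D_a^\star$ the mode is a Jordan block, so any quadratic certificate whose rate approaches $\sqrt{k_a/M_a}$ has $\mathrm{cond}(P_a)\to\infty$. Freezing $p_a$ therefore fails exactly where (i) needs it, and the sharp threshold $\tfrac12\sqrt{k_a/M_a}$ cannot be reached by this argument.

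The paper's proof makes the same leap. It says $c_a/2$ ``inherits this ceiling up to the Lyapunov constant'' and then writes $\sup_{D_a}m_a=\tfrac12\sqrt{k_a/M_a}-\sum_b\eta_{ab}$ as an equality. To make your argument rigorous you have two options:
\begin{itemize}
\item state explicitly as a hypothesis that $c_a$ denotes the modal rate and that $\eta_{ab}$ is independent of $D_a$; or
\item restate (i) with $c_a$ and $\eta_{ab}$ computed from one and the same $P_a(D_a)$, which yields a strictly more restrictive condition than $\sum_b\eta_{ab}<\tfrac12\sqrt{k_a/M_a}$.
\end{itemize}
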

\begin{proof}
The roots of $M_a\mu^2+D_a\mu+k_a=0$ have real part $-D_a/(2M_a)$ when $D_a^2<4M_ak_a$, and slowest magnitude $(D_a-\sqrt{D_a^2-4M_ak_a})/(2M_a)\to k_a/D_a$ as $D_a\to\infty$ when overdamped; the decay rate thus increases on $[0,D_a^\star]$ and decreases beyond, with maximum $\sqrt{k_a/M_a}$ at $D_a^\star=2\sqrt{M_ak_a}$. The certified diagonal $c_a/2$ inherits this ceiling up to the Lyapunov constant, so $\sup_{D_a}m_a(\lambda)=\tfrac12\sqrt{k_a/M_a}-\sum_{b\ne a}\eta_{ab}$. If positive, continuity of $c_a(D_a)$ from $0$ to the ceiling yields a $D_a\le D_a^\star$ with $m_a=0$ (case i); if $\le0$, no $D_a$ restores the margin and only a change of $k_a$ or $\eta_{ab}$ can (case ii). As $\lambda\to\lambda^*$, $k_a\to0$ drives the ceiling below $\sum_b\eta_{ab}$, forcing case (ii).
\end{proof}

In a low-inertia system the damping lever of case~(i) is realised as converter virtual damping, a freely tunable control parameter rather than a fixed machine constant [38]; this widens the practical reach of case~(i), but it does not move the fold, where $k_a\to0$ makes case~(ii) unavoidable.

The operational reading mirrors [1]: control buys lead time by flattening $\mathrm{d}k_c/\mathrm{d}\lambda$, and the scarcer the corrective flexibility, the earlier it must trigger. Critically, the \emph{authority} of the cheap lever erodes with the margin: on the WSCC critical machine (Section~\ref{sec:num}) the damping ceiling $\omega_{n,c}=\sqrt{k_c/M_c}$ falls from $8.8$ to $4.9~\mathrm{rad/s}$ ($56\%$ of base) as $\lambda\to\lambda^*$ (Fig.~\ref{fig:ctrl}), so damping-only defence weakens precisely where it is most needed, and the dichotomy tips toward structural action. The control reading thus gives operators a decision rule -- damping problem versus topology problem -- read from $k_c(\lambda)$ alone.

\begin{figure}[tbp]\centering\includegraphics[width=.72\linewidth]{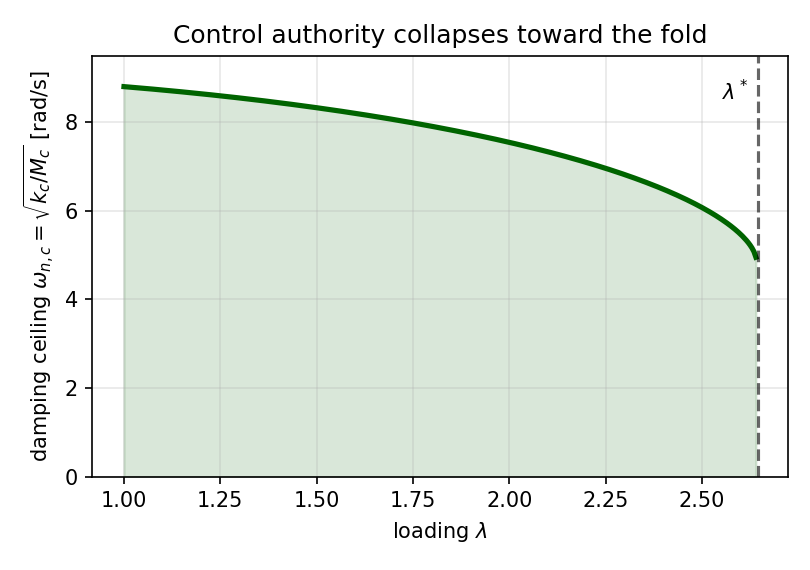}
\caption{Control authority collapses toward the fold: the damping ceiling $\omega_{n,c}=\sqrt{k_c/M_c}$ for the WSCC critical machine falls to $56\%$ of base as $\lambda\to\lambda^*$, tipping Corollary~\ref{cor:control} toward structural action.}\label{fig:ctrl}\end{figure}

\section{Numerical demonstration: two instances, one mechanism}\label{sec:num}
\paragraph{(i) WSCC nine-bus power system.}
We use the WSCC three-machine nine-bus system [6]; the pre-fault mechanical powers $0.716,1.63,0.85$~p.u.\ match the reference dispatch. Kron-reduced to the three internal nodes, with a uniform transfer stress $P_{m,i}=\lambda P^0_{m,i}$ re-solved at each $\lambda$ with a standard power-flow solver [43]; the saddle-node, the generic fold of the parameterised power-flow model [42], is at $\lambda^*\approx2.65$. The across-cut synchronising stiffness of the critical machine (gen~2, $H{=}6.4$) erodes monotonically from $k_c=2.63$ at $\lambda{=}1$ to $0.83$ at $\lambda{=}2.64$ -- to $32\%$ of base -- while the two stiff machines barely move (Fig.~\ref{fig:breath}a). Since the certified angular radius scales as $k_c$, the region breathes to $32\%$ of base (Fig.~\ref{fig:breath}b). These quantities are read directly from the reduced power-angle Jacobian; they involve no free constants.

\paragraph{(ii) Inertial Kuramoto network.}
Two communities of homogeneous second-order (inertial) Kuramoto oscillators [39,40,41] ($m{=}1$, damping $d$), strong intra- and weak inter-community coupling $K$, with a drift $\lambda$ scaling the inter-community power mismatch $\Delta P=\lambda\,\Delta P_0$. The across-cut equilibrium $\delta^*=\arcsin(\Delta P/K)$ rises to $\pi/2$, the cut stiffness $k_c=K\cos\delta^*$ erodes from $0.95$ to $0$ at $\lambda^*=K/\Delta P_0$, and the region breathes shut. No power-system specifics enter; the same Theorem~\ref{thm:main} applies verbatim.

\paragraph{Universality.}
Normalised by the distance to boundary $(\lambda-\lambda_0)/(\lambda^*-\lambda_0)$, the breathing curves of five instances -- three anchored two-area systems (varying inertia, dispatch, tie), the inertial Kuramoto pair, and the WSCC nine-bus -- collapse, to leading order, onto one profile (Fig.~\ref{fig:univ}a) and track the universal saddle-node exponent $\tfrac12$ on a log--log scale (Fig.~\ref{fig:univ}b). The clean saddle-node instances coincide to within $0.5\%$ near the boundary; the WSCC curve, whose power-flow fold is not a pure saddle-node in these coordinates, is the mild outlier -- consistent with a leading-order, not exact, collapse.

\paragraph{(iii) A certifying weak-cut instance.}
Instances (i)--(ii) exhibit the breathing driver $k_c(\lambda)$ but, with one machine per group, do not meet the weak-cut hypothesis, so the gate is conservative on them. To exhibit Theorem~\ref{thm:main} \emph{certifying}, we use two machines anchored to an infinite bus (well-conditioned restoring, no intra-group zero mode) and joined by a weak tie of strength $B_{\mathrm{tie}}$, with $\lambda$ scaling the transfer. With the damping-admissible skew weight (Section~\ref{sec:composite}) and the velocity-channel cross-coefficient (Step~2), the composite test matrix $W(\lambda)$ is a nonsingular M-matrix over $\lambda\in[1,\lambda_M)$, $\lambda_M=4.04$, while the true fold is $\lambda^*=4.12$: the certified region is non-empty over a genuine loading range and breathes shut \emph{conservatively}, before the saddle-node (Fig.~\ref{fig:cert}a). Certification holds for genuinely weak cuts, $B_{\mathrm{tie}}/B_{\mathrm{inf}}\lesssim0.09$, with a clean threshold (Fig.~\ref{fig:cert}b). Under slow drift the certified first-passage lower-bounds the simulated one, $\Mm_{\mathrm{low}}\le\Mm_{\mathrm{true}}$, across drift rates (Fig.~\ref{fig:cert}c).

\paragraph{Scope of the M-matrix gate.}
The gate certifies where its hypothesis holds -- anchored groups separated by a genuinely weak cut, instance (iii). With one machine per group and a strong cut, as in the nine-bus and the Kuramoto pair, the row-dominance margin is negative and the gate is conservative; there the operative result is the parameter-free $k_c(\lambda)$, the M-matrix supplying the existence gate where the weak-cut hypothesis is met. The conservatism is the cut strength, not the bound: the same construction certifies once the cut is weak (Fig.~\ref{fig:cert}b). This is the partial, weak-cut-conditional resolution of the certified-multimachine-margin problem posed as the principal open question of [1]; we do not claim closure beyond the weak-cut regime.

\begin{figure}[tbp]\centering\includegraphics[width=\linewidth]{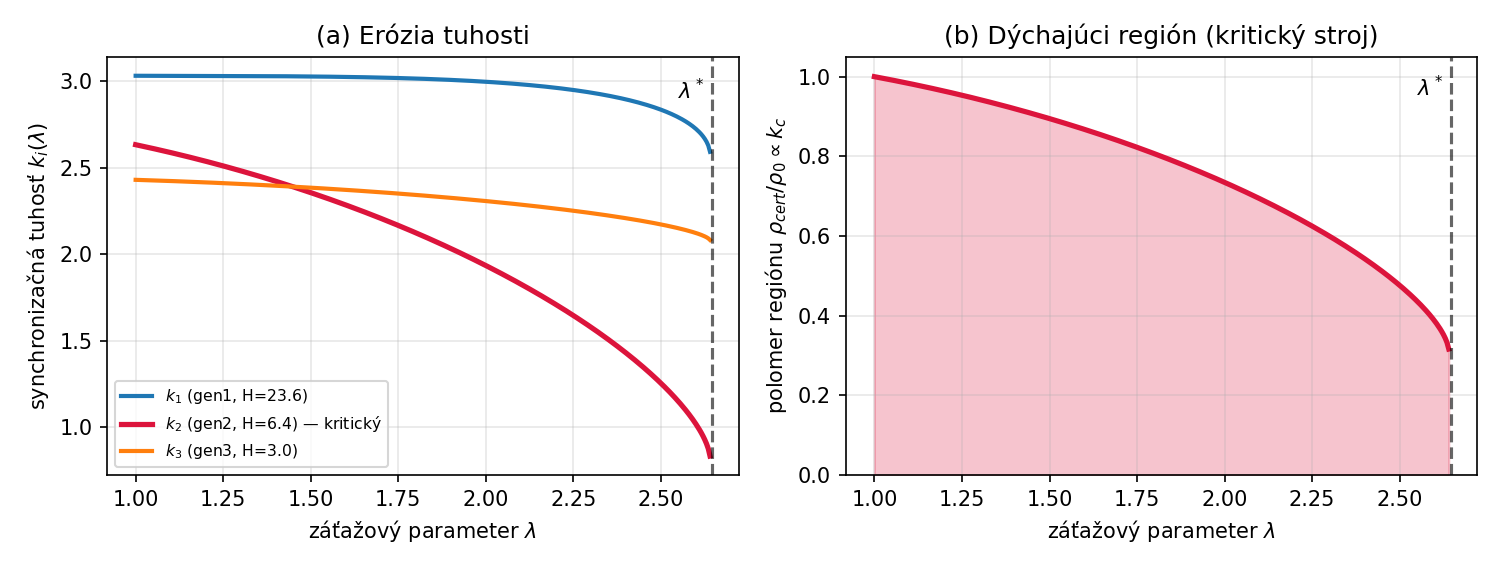}
\caption{Power-system breathing: critical-machine stiffness erosion (a) and certified region radius (b) versus loading $\lambda$, WSCC nine-bus.}\label{fig:breath}\end{figure}
\begin{figure}[tbp]\centering\includegraphics[width=\linewidth]{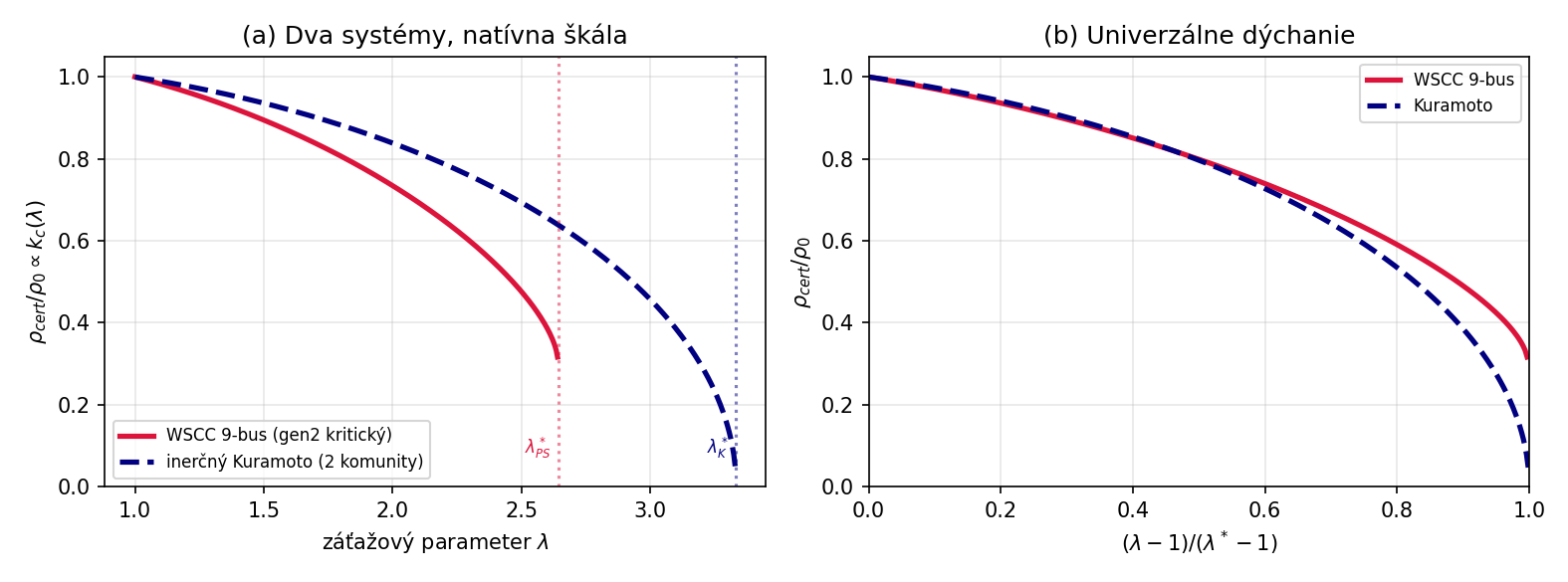}
\caption{Universality across five instances. (a) Normalised breathing curves $k_c(\lambda)/k_c(\lambda_0)$ versus distance to boundary collapse to leading order. (b) On a log--log scale against $1-s$ the curves track the universal saddle-node exponent $1/2$; the WSCC nine-bus is the mild outlier (its power-flow fold is not a pure saddle-node in these coordinates).}\label{fig:univ}\end{figure}
\begin{figure}[tbp]\centering\includegraphics[width=\linewidth]{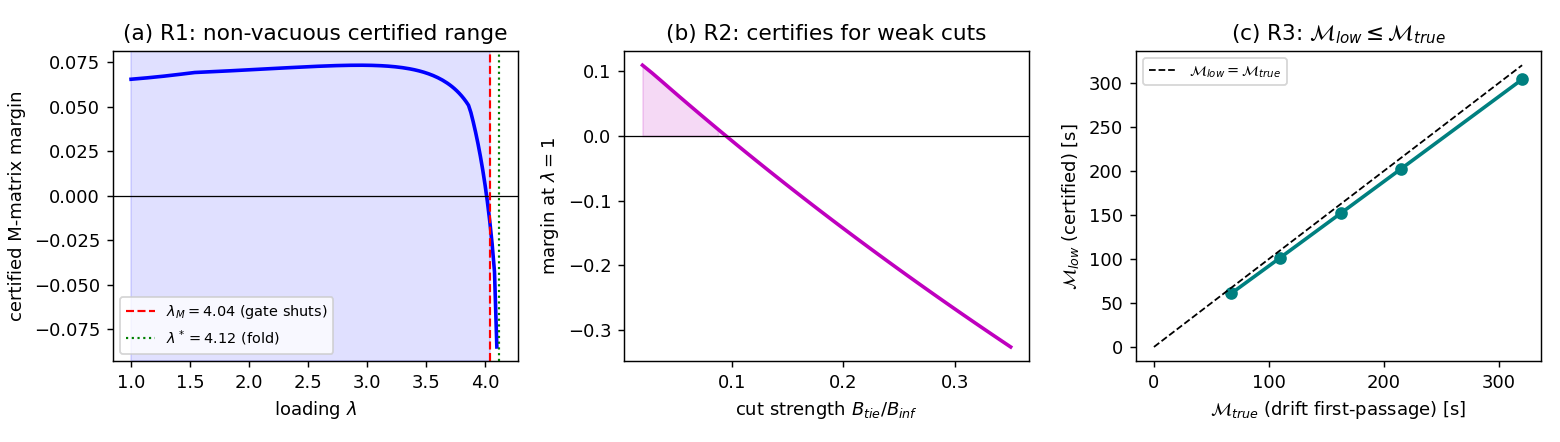}
\caption{Certifying weak-cut instance (two machines anchored to an infinite bus, weak tie). (a) The composite test matrix is a nonsingular M-matrix on $\lambda\in[1,4.04)$, shutting before the fold $\lambda^*=4.12$. (b) Certification holds for weak cuts $B_{\mathrm{tie}}/B_{\mathrm{inf}}\lesssim0.09$. (c) Under slow drift the certified margin lower-bounds the simulated first-passage, $\Mm_{\mathrm{low}}\le\Mm_{\mathrm{true}}$.}\label{fig:cert}\end{figure}

\section{Generality: rigour and the remaining step}\label{sec:gen}
The two-instance collapse (Fig.~\ref{fig:univ}b) is evidence that Theorem~\ref{thm:main} is not a power-systems result in disguise: the breathing is governed by a critical synchronising stiffness whose vanishing is a fold, and the fold is the same saddle-node normal form in both domains [44,45] (rigour level~2). The remaining step to level~1 is a proof that the normalised breathing profile is itself universal -- a centre-manifold reduction to the saddle-node normal form near the critical-stiffness fold [46], in the fast--slow setting of [32], showing $\rho_{\mathrm{cert}}/\rho_0=\Phi((\lambda-1)/(\lambda^*-1))$ for a single $\Phi$. We state this as the precise open problem, not a claim.

\section{Discussion and conclusion}\label{sec:disc}
We have developed a certified lower bound on the time-to-boundary margin $\Mm$ via a composite Lyapunov function and an M-matrix condition, identified the critical synchronising stiffness $k_c(\lambda)$ as the breathing driver, and given a constructive control dichotomy and a domain-neutral resilience--fragility reading. Three limits bound the result. The certificate is sufficient, so the M-matrix gate is conservative -- a feature (safe side), not to be sold as tight; the breathing driver $k_c$ is the tight quantity. The analysis assumes quasi-static drift ($\rho T_{\mathrm{sw}}\ll1$), inheriting the separation of [1]; under finite drift and damping the first-passage can \emph{exceed} the quasi-static fold by dynamic ride-through, so static-boundary references do not upper-bound $\Mm$ and only the certified \emph{lower} bound is guaranteed. The novelty is narrow and stated so: vector-Lyapunov M-matrix certificates are classical [2]; new are the $\lambda$-dependent test matrix tied to $\Mm$, the nonlinear double-squeeze (Proposition~\ref{prop:sector}), the breathing law $\kappa\propto k_c$, the control dichotomy (Corollary~\ref{cor:control}), and the domain-neutral universality of Section~\ref{sec:num}. Positioned within the operational-resilience programme, which calls for resilience to be quantified rather than described [47,48], $\Mm$ contributes a dynamical, certified metric: it measures, in seconds and with a guaranteed sign, the time a drifting operating point retains before the synchronism boundary.

\bmhead{Declarations}

\noindent\textbf{Funding.} No specific grant funding was received for this work.\par
\smallskip
\noindent\textbf{Competing interests.} The author declares no competing interests.\par
\smallskip
\noindent\textbf{Data availability.} The WSCC nine-bus benchmark analysed in this study is available in the cited references~\cite{andersonfouad,matpower}. The Python code reproducing all numerical results---WSCC nine-bus power flow and Kron reduction, the critical-stiffness sweep, the inertial Kuramoto instance, and the figures---together with the inertial Kuramoto network configuration, is available from the corresponding author on reasonable request.\par
\smallskip
\noindent\textbf{Author contributions.} M. Me\v{s}ter is the sole author: conceptualisation, methodology, formal analysis, computations, and writing.\par

\end{document}